\begin{document}

\title{Uniform Circle Formation by Transparent Fat Robots}
\titlerunning{Hamiltonian Mechanics}  
%
\author{Moumita Mondal \and Sruti Gan Chaudhuri}
%
%
%
\institute{Jadavpur University, Kolkata, India.
}

\maketitle              

\begin{abstract}
This paper addresses the problem of Uniform Circle Formation by $n>1$ transparent disc robots ({\it fat robots}). The robots execute repetitive cycles of the states {\it look-compute-move} in semi-synchronous manner where a set of robots execute the cycle simultaneously. They do not communicate by any explicit message passing. However, they can sense or observes the positions of other robots around themselves through sensors or camera. The robots are unable to recover the past actions and observations. They have no unique identity. The robots do not have any global coordinate system. They agree upon only y-axis (South and North direction). However, they do not have {\it chirality} or common  orientation of Y axis with respect to X axis.  Being transparent the robots do not cause any visual obstructions for other robots. But, they act as physical obstacles for other robots. This paper proposes a collision free movement strategy for the robots to form a uniform circle (in other words convex regular polygon) executing finite number of cycles. To the best of our knowledge this is the first reported results on uniform circle formation for fat robots under the considered model. 

{\bf Keywords: }  Uniform Circle Formation, Autonomous, Oblivious, Fat Robots

\end{abstract}

\section{Introduction}

In this paper we will discuss about the algorithms for a group of mobile agents (also known as sensors or swarm robots) which execute a particular task together in co-operation. One of the current trends of research is to replace a big machine or robot by these mobile agents, and solve the given problem performing the job in coordination; such as guarding an area, moving an object, determining the shape of an object etc. Since a swarm of robots has minimal software and hardware complications with respect to installation and maintenance, they can be used in hostile environment where a big robot deployment is difficult. Moreover, the total cost of a group of tiny robots is less compared to a big robot. For executing a given job, often the primary task of the robots is to make some special geometric formation, obtained by their positions. This paper addresses one such geometric formation, uniform circle formation or in other words formation of a convex regular polygon by the mobile agents. 
 
\subsection{Framework}
The robots are represented as transparent unit discs. They act independently. Every robot executes a cycle consisting of three phases: {\it Look} - the robot takes a snapshot around itself and determines the other robots' positions w.r.t its own coordinate system; {\it Compute} - based on other robots' positions, the observer robot computes a destination point to move to; {\it Move} - the robot moves to the computed destination point. The robots execute this cycle under {\it semi-synchronous} scheduling where an arbitrary set of robots look, compute and move simultaneously. This scheduling assures that when a robot is moving no other robot is observing it. The robots do not stop before reaching its destination ({\it rigid} motion). The robots do not communicate by passing any explicit messages. The robots are indistinguishable, autonomous, oblivious (no recollection of computations and observations done in previous cycles). The robots do not have any global coordinate system and chirality or orientation. They only agree on the $Y$ axis. The direction of $X$ axis is not same for all the robots.
    The robots are transparent or see-through in order to ensure full visibility, but they act as physical obstructions for other robots. The robots move in such a way that after a finite time they are equidistantly apart on a circumference of a circle forming a convex regular polygon.

\subsection{Earlier Works}
Circle formation by mobile robots has been addressed by many researchers {\cite{Ref3}, \cite{Ref4}, \cite{Ref5}, \cite{Ref6}, \cite{Ref7}, \cite{Ref8}, \cite{Ref9}, \cite{Ref10} }.
Recently Flocchini et.al \cite{Ref11}, solved the uniform circle formation problem for point robots. It is shown that the Uniform Circle Formation problem is solvable for any initial configuration of n$\ne$4 robots without any additional assumption.
Mamino and Viglietta \cite{Ref12} solved the uniform circle formation for four point robots  thus completing the uniform circle formation problem for any initial configuration for point robots without any extra assumption. 

All of these algorithms assume that a robot is a point which neither creates any visual obstruction nor acts as an obstacle in the path of other robots. Czyzowicz et. al,\cite{Ref14} extended the traditional {\it weak} model of robots by replacing the point robots with unit disc robots (fat robots).
Dutta et. al \cite{Ref15} proposed a circle formation algorithm for fat robots assuming common origin and axes for the robots. Here the robots are assumed to be transparent in order to avoid visibility block. However, a robot acts as an physical barrier if it falls in the path of other robots. The visibility or sensing range of the robots is assumed to be limited.  
Datta et. al\cite{Ref16}, proposed another distributed algorithm for circle formation by a system of mobile asynchronous transparent fat robots with unlimited visibility where the robots can see upto a fixed region around themselves. 

\subsection{Contribution of this paper} Uniform circle formation has been solved for point robots \cite{Ref11,Ref12}. Circle formation for transparent fat robots has been addressed in \cite{Ref15,Ref16}. However, uniform circle formation for fat robots is not yet reported. In this paper, we propose an algorithm to form a convex regular polygon by transparent fat robots. The main concern in this algorithm is to avoid collisions among the robots. We show that if the robots are semi synchronous, execute rigid motion and agree on only one axis (e.g., $Y$ axis), then they can form a uniform circle without encountering collision.

\section{Algorithm Description}

A set of $n$ points on the 2D plane, representing the fat robots is given.  The robots are assumed to be transparent in order to ensure full visibility, but they act as physical obstruction for another robots. A robot is named by its center, i.e., by $Ri$ we mean a circular region with some finite radius around a point $Ri$ is a disc robot. 

The number of robots, $n$ and a length $a>3$ are given as the inputs of the algorithm. The length of the sides of the polygon will be at-least $a$. Since the robots are disc shaped with equal radius, this radius can  be used as a unit which is common to all robots. Hence, the robots can agree on the length $a$ of the polygon edge. The vertices of a convex regular polygon lie on a circle. The aim of the algorithm is to form this circle where the robots are placed equidistant apart on the circumference of this circle.

{\bf Free path:}
        A path of a robot is called free path, if from source to destination point (Ref to Fig. \ref{6})the rectangular area having length as the source to destination distance and width as two units, is not contained any part of another robot. A robot moves to its destination target point only by a free path, to ensure no collision.
        
        \begin{figure}[H]
        \centering
        \includegraphics[scale=0.5]{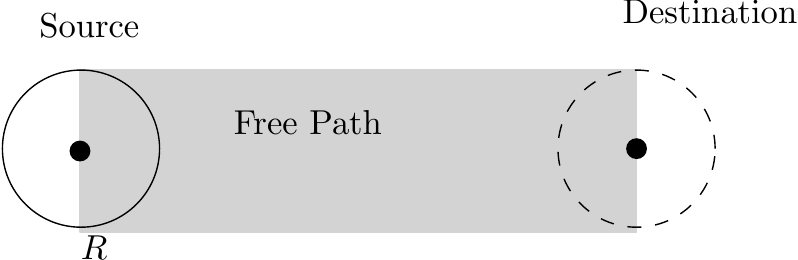}
        \caption{An example of free path of robot R }
        \label{6}
        \end{figure} 
        
{\bf Vacant Target Position:}
        A point is vacant if there exist no parts of another robot around a circular region of radius 1 around this point.

Following are the steps to be executed by each robot at their compute phase: 

\begin{itemize}
\item The robots compute the radius ($r$) of the circle to be formed using {\it ComputeRadius} routine. 
\item The robots compute $r_c$, the radius of current Smallest Enclosing Circle(SEC)\footnote{The circle with minimum radius such that all the robots are either inside the circle or on the circle.} of $n$ robots. 
\item If $r_c < r$, then a routine for expanding the SEC, {\it SECExpansion} is called.
\item Else a routine for forming uniform circle {\it FormUCircle} is called.  	
\end{itemize}

\subsection{ComputeRadius}
First the robot determine the radius of the circle to be formed. Let minimum radius of the circle required to accommodate all $n$ fat robots be $r$. The distance between two adjacent robots on the circle is given as $a$.  When there is no gap between two adjacent robots on the circle, the distance between the centres of two adjacent robots on the circle will be $2$ units. We assume that, $a$ is atleast $3$ units in length.
\begin{figure}[H]
\label{computeradi}
\centering
{\includegraphics[scale = 0.4,clip,keepaspectratio]{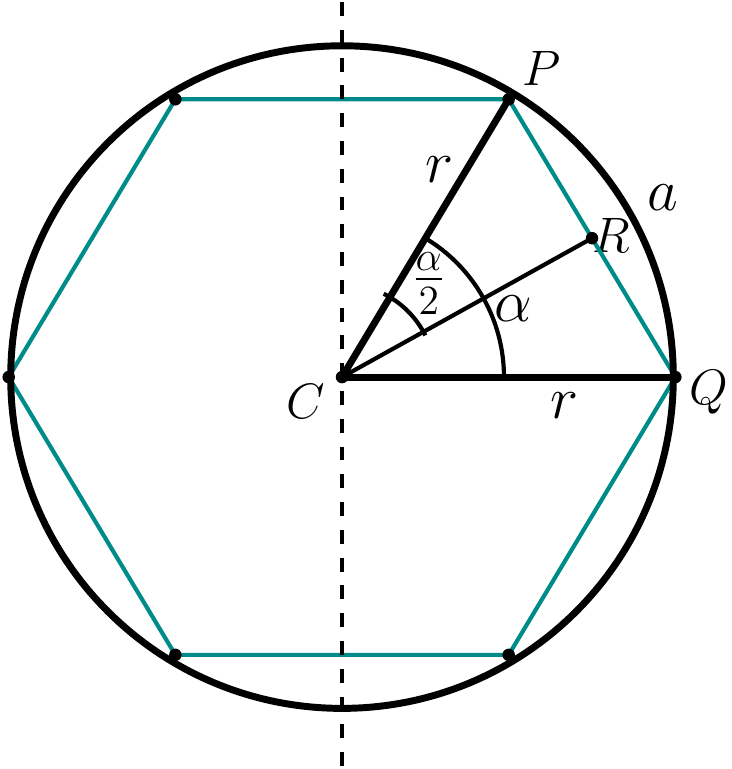}}
\caption{Minimum radius of SEC required to accommodate all the N robots .}
\end{figure}

Ref. to Fig. 1:
If $|PQ| = a$ and $PC=r$
then, 

\begin{equation}
\label{rad}
	r= \frac{a}{2sin(\frac{\alpha}{2})}
\end{equation}

where $\alpha=\frac{360}{n}$.

\subsection{SEC Expansion}
If this initial SEC cannot accommodate all the N fat robots (i.e. $r_c < r$), then the initial SEC is expanded by moving one or two robots on the SEC, such that radius of the new SEC, $r_n > r$. 
The steps of SEC expansion procedure is as below:

\begin{itemize}
	\item {\bf Step 1.} Under this procedure, first one or two leaders are elected for movement.  Let $L$ be the line parallel to the Y axis and passing through the center $c$ of the SEC. Two cases are possible and they are handled as follows:	
\begin{itemize}
	\item Case 1. The robot positions are not symmetric against the line $L$. The robot with maximum $Y$ value on the SEC and which has no mirror image against $L$, is elected as the leader. 
	
	\item Case 2. The robot positions are symmetric against $L$.	Two sub-cases are possible. 
	\begin{itemize}
	\item (a). There exists no robot on the north-most intersection point of $L$ and SEC. Note that, there exists two robots (say $R_{l1}$ and $R_{l2}$) on the SEC with maximum $Y$ value such that one robot is a mirror image of another robot against $L$. $R_{l1}$ and $R_{l2}$ both are elected as leaders. 
	\item (b). There exists a robot, $R_N$ on the north-most intersection point of $L$ and SEC. In this case $R_N$ is not selected as leader though it leads other robots by $Y$ value. There exists two robots (say $R_{l1}$ and $R_{l2}$) on the SEC with next maximum $Y$ value such that one robot is a mirror image of another robot against $L$. $R_{l1}$ and $R_{l2}$ both are elected as leaders. 
	\end{itemize}
	\end{itemize}

\item {\bf Step 2.} If the robots are in case 1, then draw a line $R_lc$ ($c$: center of the SEC). Let $R_lc$ intersect the SEC at $p$.  

If the robots are in case 2, then draw lines  $R_{l1}c$ and $R_{l2}c$. Let $R_{l1}c$ intersect the SEC at $p1$ and $R_{l2}c$ intersect the SEC at $p2$.

\item {\bf Step 3}. If the robots are in case 1 and there exists a robot, $R_p$  at $p$ (Fig. \ref{1}.), then $R_l$ moves $d$ distance, radially outward where $d = 2(r-r_c)$. Note that, the center of the changing SEC moves along the line joining $R_{p}$, $c$ and $R_{l}$.

\begin{figure}[H]
   \centering
   \includegraphics[scale=0.4]{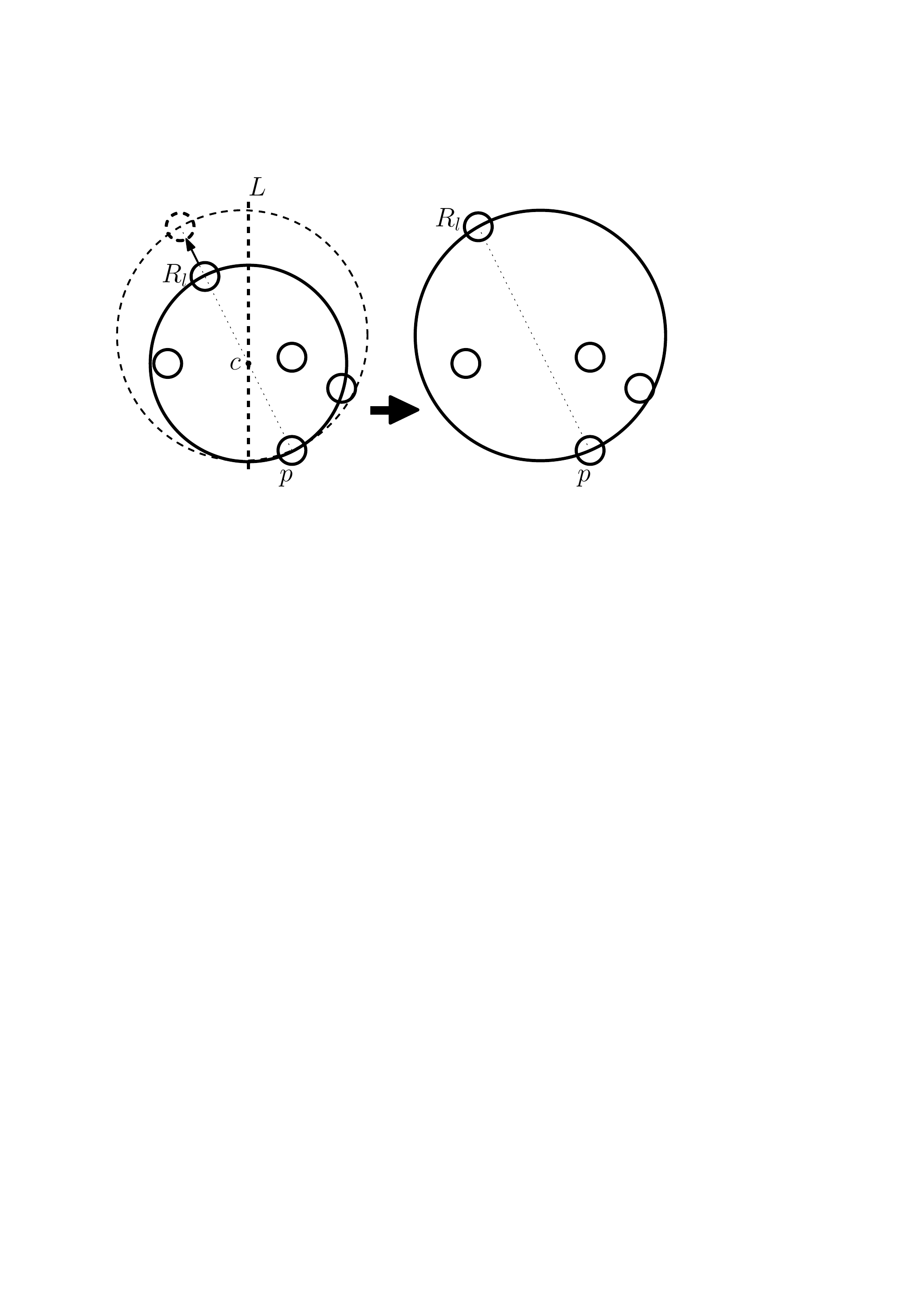}
   \caption{The robots are in case 1 and there exists robot at $p$ }
   \label{1}
   \end{figure}

Now consider the case when the robots are in case 2 and there exists a robot either at $R_{p1}$ or $R_{p2}$. Without loss of generality, let us assume that there exists a robot at $R_{p1}$. then $R_{l1}$ moves $d$ distance, radially outward where $d = 2(r-r_c)$. Note that, the center of the changing SEC moves along the line joining $R_{p1}$, $c$ and $R_{l1}$.

Next, consider the case when there exist robots at both  $R_{p1}$ or $R_{p2}$ (Fig. \ref{2}.). $R_{li} (i=1,2)$ moves $d$ distance, radially outward where $d = 2(r-r_c)$. 
Due to semi-synchronous scheduling both $R_{l1}$ and $R_{l2}$ may or may not execute the cycle simultaneously. Suppose they do not act simultaneously. $R_{p1}$ moves first. This situation is similar to case 1 of step 3. The center of the changing SEC moves along the line joining $R_{p1}$, $c$ and $R_{l1}$. On the other hand if both $R_{l1}$ and $R_{l2}$ act simultaneously, the center of the changing SEC moves along the bisector of the lines $R_{p1}c$ and $R_{p2}c$.  

 \begin{figure}[H]
  \centering
   \includegraphics[scale=0.4]{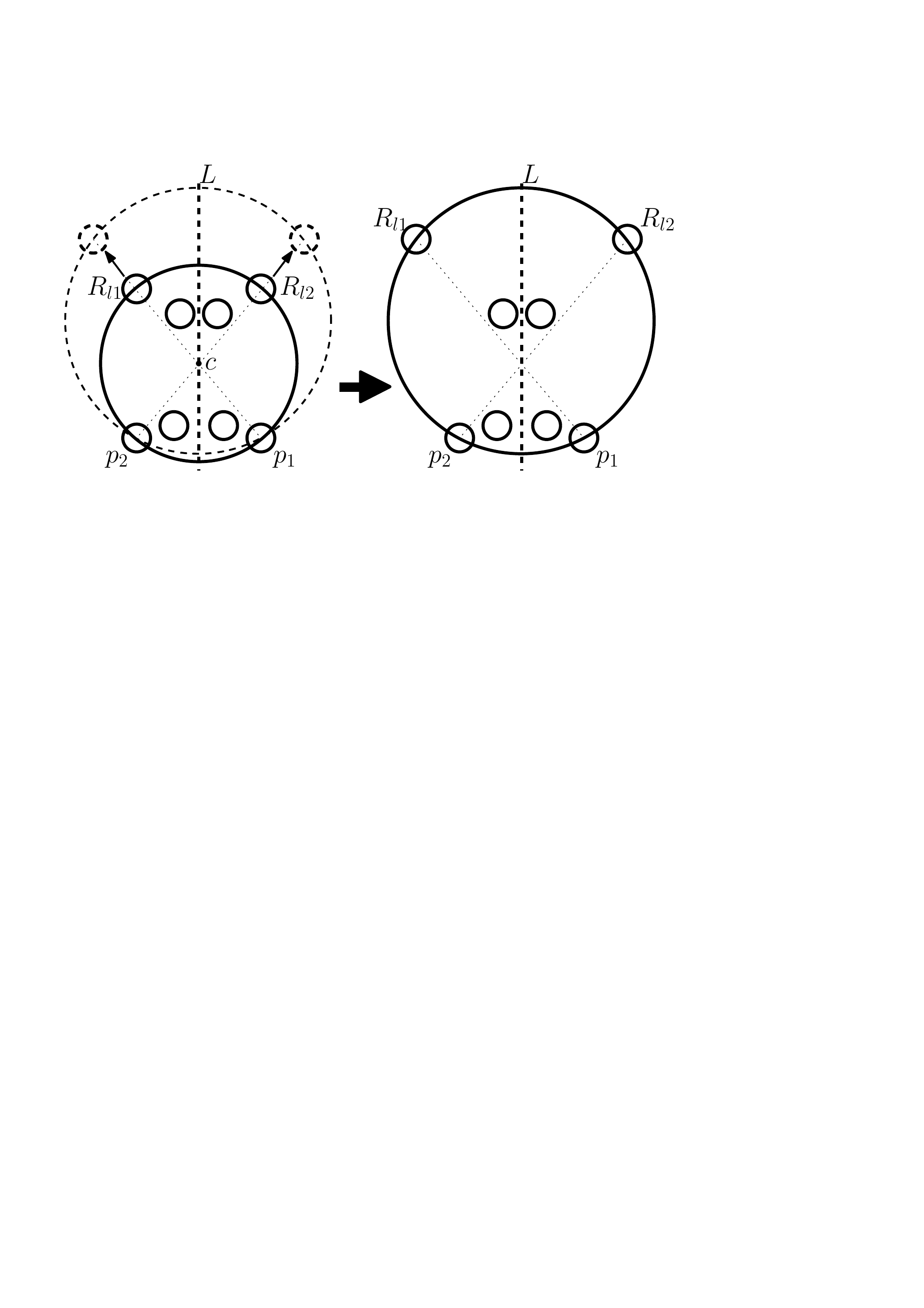}
   \caption{The robots are in case 2 and there exists robot at $p_{i} (i=1,2)$ }
   \label{2}
   \end{figure}

\begin{lemma}
When $R_l$ (case 1) or $R_{li} (i=1\ or\ 2)$ (case 2) is moving outwards, then $R_l$ (case 1) or $R_{li} (i=1\ or\ 2)$ (case 2) and $R_p$ (case 1) or $R_{pi} (i=1\ or\ 2)$ (case 2) always remain on the current SEC.
\end{lemma}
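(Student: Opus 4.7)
\medskip

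\noindent\textbf{Proof plan.} My plan is to identify, at each instant during the motion, the unique candidate for the ``changing SEC'' and then verify the two required properties: that it actually passes through $R_l$ and $R_p$ (respectively $R_{li}$ and $R_{pi}$), and that it encloses every other robot. The key observation in case~1 is that $R_l$ and $R_p$ are initially antipodal on the old SEC of radius $r_c$ centered at $c$, because by construction $R_p$, $c$, $R_l$ are collinear and both $R_p$, $R_l$ lie on the old SEC. Since $R_l$ moves radially outward along the very same line $R_p c R_l$, at an intermediate displacement $\delta\in[0,d]$ the segment $R_p R_l$ has length $2r_c+\delta$, and the natural candidate for the new SEC is the circle with this segment as a diameter, i.e.\ centered at the midpoint of $R_pR_l$ with radius $r_c+\delta/2$.

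The bulk of the proof is then to show that this candidate circle is really the SEC. Both $R_p$ and $R_l$ lie on it by construction, so I only need to verify that every other robot lies inside it. Let $\hat u$ be the unit vector from $R_p$ to $R_l$; the candidate circle has center $c+\tfrac{\delta}{2}\hat u$ and radius $r_c+\tfrac{\delta}{2}$. For any other robot located at $q$ we have $|q-c|\le r_c$ (since $q$ was inside the old SEC), and the triangle inequality gives
\begin{equation}
\bigl|q-c-\tfrac{\delta}{2}\hat u\bigr|\;\le\;|q-c|+\tfrac{\delta}{2}\;\le\;r_c+\tfrac{\delta}{2},
\end{equation}
so $q$ lies inside the candidate circle. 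Hence the candidate circle contains all $n$ robots with $R_p$ and $R_l$ on its boundary, and since no smaller circle can contain the diametrically opposite pair $R_p, R_l$, it is indeed the SEC. Thus $R_p$ and $R_l$ remain on the changing SEC throughout the motion.

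For case~2, the asymmetric sub-case in which only one of $R_{l1}, R_{l2}$ has begun moving reduces verbatim to the argument above, with the moving leader and its antipode playing the roles of $R_l$ and $R_p$. The genuinely new situation is when both $R_{l1}$ and $R_{l2}$ move simultaneously. Here I would exploit the mirror symmetry about $L$: the pairs $(R_{p1},R_{p2})$ and $(R_{l1},R_{l2})$ are symmetric about $L$, and since both leaders move the same radial distance, the moved positions $R_{l1}',R_{l2}'$ are also symmetric about $L$. Hence the four points $R_{p1},R_{p2},R_{l1}',R_{l2}'$ lie on a common circle whose center lies on $L$ (which is precisely the bisector of $R_{p1}c$ and $R_{p2}c$). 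I would then repeat the triangle-inequality containment argument with this center to conclude that all other robots remain inside, so this circle is the SEC, and all four distinguished robots lie on it.

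\medskip

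\noindent\textbf{Anticipated obstacle.} The purely geometric containment is straightforward; the subtle point is justifying that the candidate circle I write down \emph{is} the SEC rather than just an enclosing circle. In case~1 this is easy because $R_p$ and $R_l$ force the SEC to have diameter at least $|R_pR_l|$. In the symmetric simultaneous sub-case of case~2, however, the SEC may a priori be determined by only two or three of the four boundary robots, so I must be slightly careful to argue that no smaller enclosing circle exists—essentially by observing that any circle enclosing the antipodal pair $R_{p1},R_{l1}'$ already has radius at least $r_c+\delta/2$, matching our candidate.
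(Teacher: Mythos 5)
Your treatment of case~1 (and of the sub-case of case~2 where only one leader moves) is correct and complete: exhibiting the circle with $R_pR_l$ as diameter, checking by the triangle inequality that it contains the whole old disc and hence every stationary robot, and invoking the antipodal pair for minimality is a full proof. It is in fact tighter than the paper's own argument, which rests on the blanket assertion that ``the maximum distant points of a set lie on the SEC'' --- a statement that is false for general point sets; your diameter-circle construction is what actually makes it true here.

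The genuine gap is in the simultaneous sub-case of case~2, and your anticipated obstacle is exactly where the plan breaks. Put $c$ at the origin, let $L$ be the $y$-axis, and let the leaders sit at angle $\theta$ from north, i.e.\ $R_{l1,2}=r_c(\pm\sin\theta,\cos\theta)$, $R_{p1,2}=r_c(\mp\sin\theta,-\cos\theta)$. The circle through the four points $R_{p1},R_{p2},R_{l1}',R_{l2}'$ has center $\bigl(0,\tfrac{\delta}{2\cos\theta}\bigr)$ and radius $\sqrt{r_c^2+k^2+2kr_c\cos\theta}$ with $k=\tfrac{\delta}{2\cos\theta}$; this exceeds $r_c+k$ only when $\cos\theta=1$, so your triangle-inequality containment step fails for every $\theta>0$, and $R_{p1},R_{l1}'$ are a chord, not a diameter, so the minimality bound does not match either. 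Worse, the conclusion itself can fail: take $r_c=1$, $\cos\theta=0.1$, $\delta=0.2$, and a fifth robot at $(0,-1)$. Then $R_{l1,2}'=(\pm1.194,0.12)$, and the circle with $R_{l1}'R_{l2}'$ as diameter (center $(0,0.12)$, radius $1.194$) encloses all five robots, so it is the new SEC; but $R_{p1},R_{p2}=(\mp0.995,-0.1)$ lie at distance $\approx1.02<1.194$ from its center, i.e.\ strictly inside. So when both leaders move together the farthest pair becomes $\{R_{l1}',R_{l2}'\}$ rather than $\{R_{li}',R_{pi}\}$, the four points are not all on the new SEC, and no proof along your lines (or the paper's) can close this sub-case as stated; the lemma would need to be weakened to claim only that the moving leaders remain on the SEC.
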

\begin{proof}
First consider case 1. 
Initially $R_l$ and $R_p$ are on the SEC and they are diagonally opposite to each other. Hence, $R_p$ is in maximum distance from $R_l$. No robot other than $R_l$  is moving. $R_l$ is moving following the straight line $R_pR_l$ and away from $R_p$. Thus the distance between $R_p$ and $R_l$ is increasing. $R_p$ continues to remain in maximum distant from $R_l$. According to the SEC property \cite{berg}, the maximum distant points of a set lies on the SEC of that point set. Hence, $R_l$ and $R_p$ remains on the current SEC (or on the changing SEC). 

Now consider case 2. 
If any of $R_{li1}$ or $R_{li2}$ is moving, the case is similar to case 1.
Otherwise, initially, $R_{pi} (i=1, 2)$ lie at diagonally opposite of $R_{li} (i=1, 2)$. Hence, $R_{pi} (i=1,2)$ is in maximum distance from $R_{li}(i=1,2)$. No robot other than $R_{li}(i=1,2)$  is moving. $R_{li}(i=1,2)$ is moving following the straight line $R_{pi}R_{li}$ and away from $R_{pi}$. Thus the distance between $R_{pi}$ and $R_{li}$ (for $(i=1,2)$) is increasing. $R_{pi}(i=1,2)$ continues to remain in maximum distant from $R_{li}(i=1,2)$. According to the SEC property the maximum distant points of a set lies on the SEC of that point set. Hence, $R_{li}$ and $R_{pi}$ remains on the current SEC (or in the changing SEC).    
\qed
\end{proof}

\begin{lemma}
\label{gerc0}
When $R_l$ (case 1) or $R_{li} (i=1,2)$ (case 2) reaches its destination, then the radius of the new SEC, $r_n \ge r_c$
\end{lemma}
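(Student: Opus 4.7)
The plan is to combine the preceding lemma---which ensures that $R_l$ (or each $R_{li}$) and its antipodal partner $R_p$ (or $R_{pi}$) remain on the current SEC throughout the motion---with the elementary fact that the diameter of any circle enclosing a finite point set is at least the maximum pairwise distance in that set. Together these give a direct lower bound on $r_n$ without any need to compute the new center explicitly.

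First I would handle case 1. Initially, $R_l$ and $R_p$ lie at antipodal positions on the SEC of radius $r_c$, so $|R_l R_p| = 2 r_c$. By the construction in Step 3, the leader $R_l$ then translates a distance $d = 2(r - r_c)$ along the ray from $R_p$ through $R_l$, away from $R_p$, so after the move
\begin{equation}
|R_l^{\mathrm{new}} R_p| \;=\; 2 r_c + d \;=\; 2 r_c + 2(r - r_c) \;=\; 2 r.
\end{equation}
Since the new SEC must enclose both $R_p$ and $R_l^{\mathrm{new}}$, its diameter is at least $2r$, whence $r_n \ge r$. Because \emph{SECExpansion} is invoked only when $r_c < r$, this immediately yields $r_n \ge r > r_c$.

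For case 2, I would split according to what the semi-synchronous scheduler permits. If only one of $R_{l1}, R_{l2}$ is activated, Step 3 reduces the situation to case 1 with the active antipodal pair, and the previous paragraph applies verbatim. If both $R_{l1}$ and $R_{l2}$ move simultaneously, I would repeat the distance computation for each pair independently to obtain $|R_{li}^{\mathrm{new}} R_{pi}| = 2r$ for $i = 1, 2$; the new SEC, which must enclose all four of these points, then has diameter at least $2r$, and the same conclusion $r_n \ge r > r_c$ follows.

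The main obstacle is the simultaneous subcase of case 2: the center of the intermediate SEC travels along the bisector of $R_{p1}c$ and $R_{p2}c$, and it is not a priori clear which antipodal pair (or combination of the four points) realizes $r_n$. The key simplification is that the proof never needs to locate this new center; the max-distance lower bound on an enclosing circle's diameter, combined with the preceding lemma's guarantee that each $R_{li}$ and its partner $R_{pi}$ stay on the current SEC, reduces the whole argument to the single distance equality displayed above.
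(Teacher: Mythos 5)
Your proof is correct and follows essentially the same route as the paper's: the leader and its antipodal partner form a pair whose separation lower-bounds the diameter of any circle enclosing the new configuration. Your version is in fact more careful than the paper's --- the explicit computation $|R_l^{\mathrm{new}}R_p| = 2r_c + 2(r-r_c) = 2r$ yields the stronger bound $r_n \ge r > r_c$ (which is what the later lemma on \emph{SECExpansion} terminating actually needs), whereas the paper's proof asserts the new diameter is $2r_c$, apparently a typo for $2r$.
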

\begin{proof}
Consider case 1. 
$R_l$ and $R_p$ in new position is the diameter of the new SEC. $R_l$moves in such a way that the length of the diameter of this SEC is  $2r_c$.
For case2, 
if any one robots among $R_{l1}$ or $R_{l2}$ moves, the situation is same as in case 1. Otherwise, the $R_{li} (i=1,2)$ and $R_{p_i}(i=1,2)$ lie on the new SEC and have maximum distance among all pairs of robots. Thus the radius of new SEC ($r_n$) is $>r_c$   
\qed
\end{proof}

\begin{lemma}
The movement of $R_l$(case 1) or $R_{li} (i=1,2)$ (case 2) is collision free.	
\end{lemma}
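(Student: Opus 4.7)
The plan is to reduce the lemma to a pair of geometric checks: the rectangular free path swept by each moving robot contains no other robot, and in the simultaneous-motion subcase of case 2 the two movers remain separated throughout the motion.

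For case 1, I will place the center $c$ of the SEC at the origin and rotate coordinates so that $R_l$ starts at $P = (r_c,0)$ and moves to $(r_c + d, 0)$; the free path rectangle is then $[r_c, r_c+d]\times [-1,1]$. Any other robot $R'$ with center $P' = (x',y')$ satisfies two constraints: SEC containment, giving $x'^2 + y'^2 \le r_c^2$ and in particular $x' \le r_c$; and non-overlap with $R_l$ at its start, giving $(x'-r_c)^2 + y'^2 \ge 4$. The forward argument is to show that, under these constraints, the distance from $P'$ to the rectangle is at least $1$, so that the radius-$1$ disc of $R'$ cannot enter the rectangle. A short split on $|y'|$ suffices: in the strip $|y'|\le 1$ the non-overlap inequality gives $(r_c-x')^2 \ge 3$ and hence separation $\sqrt{3}$; outside the strip, $(r_c-x')^2 + (|y'|-1)^2 \ge 5-2|y'|$, which is $\ge 1$ for $|y'|\le 2$, and $(|y'|-1)^2 > 1$ handles the rest.

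For case 2, when only one of $R_{l1},R_{l2}$ is active the argument reduces to case 1, with the other leader playing the role of a static obstacle whose coordinates in the rotated frame still satisfy both constraints above. The remaining subcase is simultaneous motion. Using the given mirror symmetry across $L$, I will take $c$ at the origin with $L$ as the $y$-axis, write $P_i = (\pm x_0, y_0)$ with outward unit vectors $\hat{u}_i = P_i/r_c$, and parameterize each robot's position at time $t_i \in [0,d]$. Expanding gives the squared distance between the two movers as
\[
x_0^2\left(2 + \frac{t_1+t_2}{r_c}\right)^2 + y_0^2\left(\frac{t_1-t_2}{r_c}\right)^2,
\]
which is bounded below by the initial value $4x_0^2 \ge 4$ (using $|x_0|\ge 1$ from initial non-overlap), so the separation never drops below $2$.

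The main obstacle is the static check in case 1. SEC containment alone is not enough, because an adversarial $R'$ placed with $x'\approx \sqrt{r_c^2-1}$ and $|y'|\le 1$ lies inside the SEC yet only about distance $r_c-\sqrt{r_c^2-1}$ from the rectangle, which can be less than $1$. The non-overlap hypothesis with $R_l$ at its start is precisely what eliminates this configuration, forcing $r_c-x' \ge \sqrt{3}$ and closing the gap. Beyond that, the dynamic analysis in case~2 is comparatively clean, because radially outward motions of two mirror-symmetric leaders manifestly increase their separation.
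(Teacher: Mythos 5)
Your proof is correct, and it is considerably more careful than the paper's own argument, which consists of a single qualitative assertion: the leaders ``move diagonally outwards from the current SEC,'' no other robot is moving, hence nothing is in their path. Your quantitative version fills two real gaps in that one-liner. First, the paper ignores the fat-robot geometry near the start of the motion: the mover's unit disc initially protrudes one unit into the SEC's interior, so ``moving outward'' alone does not rule out grazing a static robot parked just inside the boundary; your use of the initial non-overlap constraint $(x'-r_c)^2+y'^2\ge 4$ is exactly what is needed to exclude this. Second, the paper's claim that ``no other robot is moving'' is simply false in the simultaneous subcase of case~2, where both $R_{l1}$ and $R_{l2}$ are activated in the same round; your parameterization with independent $t_1,t_2$ and the lower bound $4x_0^2\ge 4$ on the squared separation is the missing argument, and it is clean. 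One simplification worth noting: in your rotated frame every static center $P'=(x',y')$ satisfies $x'\le r_c$, so the nearest point of the mover's entire trajectory $S=[r_c,r_c+d]\times\{0\}$ to $P'$ is the starting point $(r_c,0)$; hence $d(P',S)=d(P',P)\ge 2$ follows in one line from non-overlap at the start, and the rectangle-plus-strip case analysis can be dispensed with entirely. This also sidesteps a mismatch you inherit from the paper: the ``free path'' rectangle of half-width~$1$ is not by itself equivalent to keeping centers at distance~$\ge 2$ from the trajectory (a robot just beyond the destination would pass the rectangle test yet collide), whereas the direct distance-to-segment bound is the actual collision criterion.
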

\begin{proof}
Since, $R_l$(case 1) or $R_{li} (i=1,2)$ (case 2) move diagonally outwards from the current SEC. No other robot is moving. Hence no robot comes in the path of these moving robots. Collision does not occur.
\qed
\end{proof}

\item If the robots are in case 1 and there exists no robot at $p$ (Fig. \ref{3}), then let $R_f$ be a robot on the SEC which is farthest from $R_l$. In case of tie the robot with maximum Y value is selected. Then, a line $R_fc$ is drawn. Compute a point $q$ on the ray ${R_fc}$, such that $|R_fq|= 2r$.
\begin{itemize}
\item If the path from $R_l$ to $q$ is a free path then $R_l$ moves to $q$.
\item Else, let $R'_l$ be the robot nearest to $q$, and has a free path to $q$. In case of tie, the robot with maximum Y value is considered as $R'_l$ and finally $R'_l$ moves to $q$. 
\end{itemize}

\begin{figure}[H]
   \centering
   \includegraphics[scale=0.4]{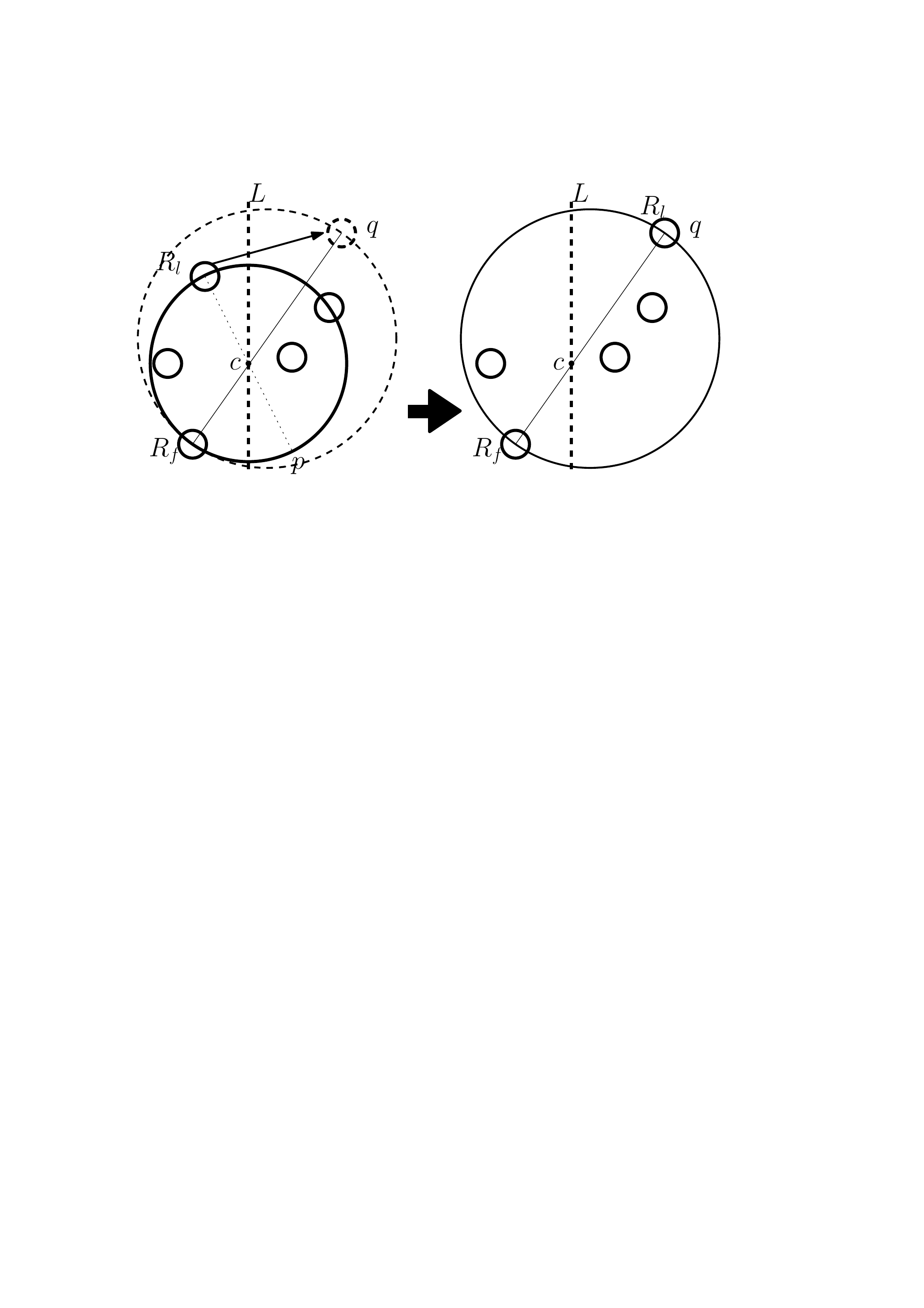}
   \caption{The robots are in case 1 and there exists no robot at $p$ }
    \label{3}
   \end{figure}

If the robots are in case 2 and there exists no robot at $p_{i} (i=1,2)$ (Fig. \ref{4}), then let $R_{fi}(i=1,2)$ be the robots on the SEC which is farthest from $R_{li}(i=1,2)$. If there exist two such farthest robots for each $R_{li}(i=1,2)$, then the one with the maximum Y value is considered. Then, the lines $R_{fi}c (i=1,2)$ are drawn. Compute the points $q_i(i=1,2)$ on the rays $R_{fi}c(i=1,2)$, such that $|R_fq_i|(i=1,2)= 2r$.

\begin{itemize}
\item If the path from $R_{li}(i=1,2)$ to $q_i(i=1,2)$ is a free path then $R_{li}(i=1,2)$ moves to $q_i(i=1,2)$.
\item Else, let $R'_{li}(i=1,2)$ be the robot nearest to $q_i(i=1,2)$, and has a free path to $q_i(i=1,2)$. If two such nearest robots exist for each  $q_i(i=1,2)$, then the one with the maximum Y value is considered as $R_{li}(i=1,2)$ and then finally, $R'_{li}(i=1,2)$ moves to $q_i(i=1,2)$. 
\end{itemize}

   \begin{figure}[H]
   \centering
   \includegraphics[scale=0.4]{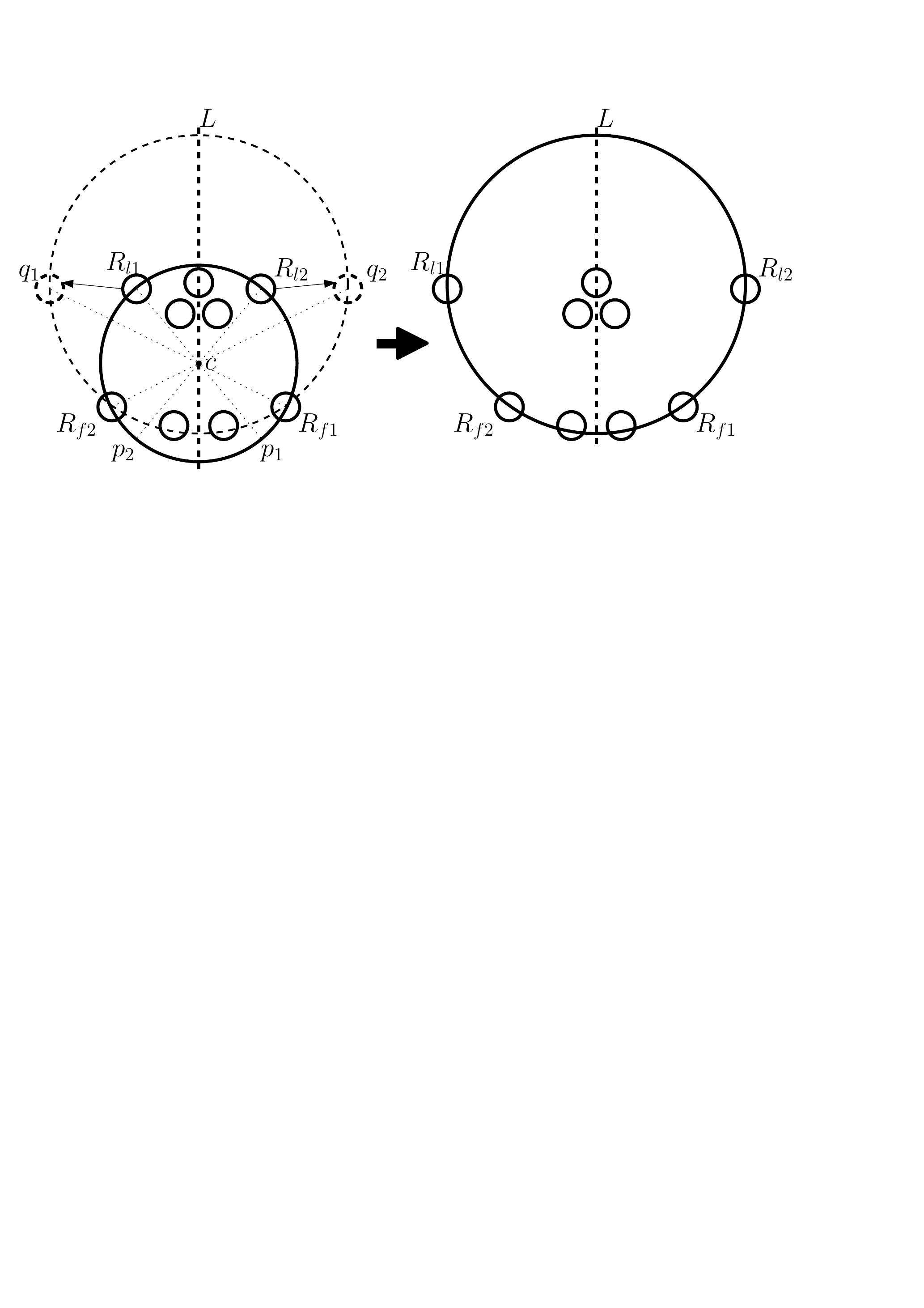}
   \caption{The robots are in case 2 and there exists no robot at $p_{i} (i=1,2)$ }
   \label{4}
   \end{figure}

 \begin{lemma}
 \label{onsec}
When $R_l$ or $R'_l$ (case 1) or $R_{li} (i=1,2)$ or $R'_{li} (i=1,2)$  (case 2) is (are) moving to $q$ (case 1) or $q_i(i=1,2)$ (case 2), then $R_l$ or $R'_l$ (case 1) or $R_{li} (i=1,2)$ or $R'_{li} (i=1,2)$ (case 2) and $R_f$ (case 1) or $R_{fi} (i=1,2)$ (case 2) always remain on the current SEC. 
\end{lemma}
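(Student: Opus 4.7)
My plan follows the template of Lemma 1, which uses the SEC property (the pair of points realizing the maximum pairwise distance in a finite set lies on the boundary of its SEC) to certify that the stationary landmark and the moving leader together remain on the current SEC. The difference here is that the destination is $q$, defined via $R_f$, rather than a point diametrically opposite the moving robot on the initial SEC.

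First I would show that $R_f$ remains on the current SEC throughout the motion. By construction $R_f$ is the robot farthest from $R_l$ and lies on the initial SEC boundary; during this step only the chosen leader moves, and it moves toward $q$, which lies on the ray from $R_f$ through $c$ beyond $c$. An analogue of Lemma \ref{gerc0} ensures that the SEC radius is non-decreasing, so combined with $R_f$ being stationary, $R_f$ stays on the current SEC.

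Next I would argue that the moving robot is on the current SEC. By construction $|R_f q|=2r>2r_c$, so $R_f$ and $q$ form a chord longer than the diameter of the initial SEC. Once the distance from the moving robot to $R_f$ exceeds $2r_c$ along the straight-line path, the pair $(R_f,\text{moving robot})$ realizes the maximum pairwise distance in the configuration, and by the SEC property they are both boundary points of the (now expanded) current SEC, which grows continuously until the moving robot reaches $q$. Case 2 reduces to Case 1: if only one of $R_{l1},R_{l2}$ moves at a given step the situation is identical; if both act simultaneously, the two pairs $(R_{li},R_{fi})$ lie on opposite sides of the line $L$ and the argument applies symmetrically to each side without interference.

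The main obstacle is the $R'_l$ subcase: $R'_l$ is selected only as the nearest robot to $q$ with a free path, so it may start strictly inside the initial SEC and the claim cannot literally hold from $t=0$. I would interpret the lemma as asserting that $R'_l$ lies on the current SEC from the instant it reaches the boundary, and note that for earlier times the SEC (and membership of $R_f$) is unaffected because only an interior point is moving. A related subtlety is that along the straight-line path from the source to $q$, the distance to $R_f$ is a quadratic in the path parameter and need not be monotonic; I would have to verify that any temporary decrease occurs while the moving robot is still strictly interior to the SEC, so that the SEC is unchanged during that subinterval and the claim is consistent.
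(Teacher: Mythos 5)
Your proposal follows essentially the same route as the paper: the paper's proof likewise observes that only the chosen robot moves, that $|R_f q| > |R_f R_l|$ (resp.\ $|R_f R'_l|$), concludes that on arrival at $q$ the pair consisting of $R_f$ and the moving robot realizes the maximum pairwise distance, and then invokes the property that a maximum-distance pair lies on the SEC. The two obstacles you flag --- that $R'_l$ may start strictly inside the SEC so the ``always on the current SEC'' claim cannot hold from the outset, and that the distance to $R_f$ need not increase monotonically along the straight path --- are genuine, but the paper's proof does not address them either: it only examines the final configuration at $q$ and treats $R'_l$ exactly like $R_l$, so your endpoint reading of the lemma is the one consistent with the paper's own argument.
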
 
\begin{proof}
First consider case 1. Initially the distance between $R_l$ and $R_f$ is maximum. No robot other than $R_l$ or $R'_l$ moves. $|R_fq| > |R_fR_l|$ or $|R_fq| > |R_fR'_l|$. Hence, when $R_l$ or $R'_l$ reaches $q$ the distance between $R_f$ and $R_l$ or $R'_l$ remains maximum among other pairs of robots. According to the property of SEC the maximum distant points in a set lie on the SEC of that set. Hence $R_l$ or $R'_l$ and $R_q$ lie on the new SEC. 

Now consider case 2. 
Suppose the situation is when the path between $R_{li}$ and $q_{i}$ (i=1 or 2 respectively) are free paths. If any one of $R_{l1}$ or $R_{l2}$ moves, the case is similar to case 1. 
Otherwise, 
initially, $R_{fi} (i=1,2)$ lie at maximum distance from $R_{li} (i=1,2)$. No robot other than $R_{li}(i=1,2)$  is moving. 
$|R_{fi}{qi}| > |R_{fi}R_{li}| (i=1,2)$. Hence, when $R_{li}(i=1,2)$ reach $q_i(i=1,2)$ the distance between $R_{fi}(i=1,2)$ and $R_{qi}(i=1,2)$ remain maximum among other pairs of robots. According to the property of SEC the maximum distant points in a set lie on the SEC of that set. Hence $R_{fi}(i=1,2)$ and $R_{qi}(i=1,2)$ lie on the new SEC. 

Now consider the situation when the paths between $R_{li}$ and $q_{i}$ (i=1 or 2 respectively) are not free\footnote{Either one of the paths between $R_{l1}$ to $q1$ or $R_{l2}$ to $q2$ can be blocked by other robots or both the path are blocked but any one of $R'_{l1}$ and $R'_{l2}$ moves due to semi-synchronous scheduling}. 
Then $R'_{li}(i=1\ or\ 2)$ move(s) to $q_i(i=1\ or\ 2)$. Note that after the movement of $R'_{li}(i=1\ or\ 2)$ to $q_i(i=1\ or\ 2)$, the distance between $R'_{qi}(i=1\ or\ 2)$ and $R_{fi}(i=1\ or\ 2)$ is/are maximum among any other pair of robots. Hence, according to the property of SEC  $R'_{li}(i=1\ or\ 2)$ and $R_{pi}(i=1\ or\ 2)$ lie on the new SEC. 
\qed 
\end{proof}

\begin{lemma}
\label{gerc}
When $R_l$  or $R'_l$ reaches $q$ (case 1), 
$R_{li}(i=1,2)$ or $R'_{li}$ reaches $q_{i}(i=1,2)$ (case 2), the diameter of the new SEC $\ge r_c$  
\end{lemma}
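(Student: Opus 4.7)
The plan is to reduce this to the endpoint geometry of $q$ (respectively $q_1,q_2$) relative to the farthest robot $R_f$ (respectively $R_{f1},R_{f2}$), and then invoke the preceding Lemma \ref{onsec} together with the basic fact that any chord length is a lower bound for the diameter of the enclosing circle.

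First I would handle Case 1. By construction, the point $q$ is placed on the ray $\overrightarrow{R_fc}$ at distance exactly $2r$ from $R_f$, so whichever robot ($R_l$ or $R'_l$) actually makes the trip ends its move at a point whose distance to $R_f$ is $2r$. By Lemma \ref{onsec}, both $R_f$ and the robot that moved lie on the new SEC. Hence the new SEC contains two points at distance $2r$, so its diameter is at least $2r$. Since the \emph{SEC Expansion} procedure is invoked precisely when $r_c<r$, we obtain diameter $\ge 2r > 2r_c \ge r_c$, which is what is claimed.

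For Case 2 I would argue analogously, splitting on whether one or both of $R_{l1},R_{l2}$ (or their substitutes $R'_{l1},R'_{l2}$) execute in the current round under the semi-synchronous scheduler. If only one moves, the argument is identical to Case 1 using the pair $(R_{fi},q_i)$ for the relevant $i$. If both move, each of the pairs $(R_{f1},q_1)$ and $(R_{f2},q_2)$ lies on the new SEC by Lemma \ref{onsec}, and each pair is separated by $2r$ by construction; again picking either chord gives diameter $\ge 2r$.

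The main obstacle I expect is not the chord/diameter step, which is immediate, but a careful bookkeeping check that Lemma \ref{onsec} really does cover every sub-situation handled here, in particular the mixed case where the free path exists for one leader but the other leader has to be replaced by $R'_{li}$. I would add a sentence explicitly noting that Lemma \ref{onsec} is stated uniformly over $R_{li}$ and $R'_{li}$, so the same chord-length bound $|R_{fi}q_i|=2r$ applies regardless of which robot actually ends the round at $q_i$. With that observation the conclusion diameter $\ge 2r > r_c$ follows in every subcase.
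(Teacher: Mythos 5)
Your proof is correct and follows essentially the same route as the paper's: both arguments place the moved robot (at $q$, resp.\ $q_i$) and the farthest robot $R_f$ (resp.\ $R_{fi}$) on the new SEC --- you by citing Lemma~\ref{onsec}, the paper by asserting the pair realizes the maximum pairwise distance --- and then bound the diameter from below by that chord. Your version is in fact a little cleaner, since you use $|R_fq|=2r$ explicitly to conclude diameter $\ge 2r > r_c$, whereas the paper's write-up conflates $r$ with $r_c$ at that step.
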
  
\begin{proof}
First consider case 1. The distance between $R_l$ or $R'_l$ at $q$ and $R_p$ is maximum among all pair distances. If $R_pq$ is the diameter of the new SEC then its radius is $= rc$. Otherwise, the radius of the new SEC is $ >r_c$.

Now consider case 2. Since $R_{pi} (i=1,2)$ and $q_i(i=1,2)$ are on the new SEC, with the similar argument as in case 1, it can be proved that the radius of new SEC $\ge r_c$
\qed
\end{proof}
 
\end{itemize}

\begin{lemma}
The movement of $R_l$ (case 1) or $R_{li}$ (case 2) to $q$ (case 1) or $q_i(i=1,2)$ (case 2) is collision free.	
\end{lemma}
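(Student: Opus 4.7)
The plan is to argue that the movement is collision-free essentially by construction, since the algorithm explicitly restricts motion to free paths as defined earlier in the paper. My proof would proceed in three steps, with the subtlety about simultaneous movers in case 2 being the part that needs the most care.

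First, I would recall the definition of a free path: the rectangular corridor of width $2$ around the source-destination segment contains no part of another robot. In case 1 the algorithm proceeds in one of two ways: if the segment $R_lq$ is itself a free path, then $R_l$ uses it directly; otherwise the algorithm picks $R'_l$ as the robot nearest to $q$ that \emph{has} a free path to $q$. In both subcases, whichever robot actually moves does so along a free path by the very criterion by which it was selected. The same construction applies to case 2, for $R_{li}$ or its replacement $R'_{li}$ moving toward $q_i$.

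Second, I would invoke the semi-synchronous scheduling together with the fact that in the current invocation of \texttt{SECExpansion} only the leader(s) are designated to move. Every non-moving robot is stationary during the \emph{Move} phase of the mover, so any robot lying off the rectangular corridor of a free path cannot suddenly drift into it. Combined with the free-path property, this rules out collisions between the mover and any stationary robot.

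The main obstacle, and the step I would treat most carefully, is case 2 when both leaders (or their replacements) move simultaneously and could in principle collide \emph{with each other}. Here I would use the symmetry assumption that defines case 2: the configuration is mirror-symmetric across the vertical line $L$, and the constructions of $R_{f1},R_{f2}$ and of $q_1,q_2$ are likewise mirror-symmetric (the tie-breaking on maximum $Y$ respects the reflection across $L$). Hence the two trajectories lie on opposite sides of $L$, or at worst meet $L$ only at their mirror-symmetric endpoints; the two moving disc robots therefore stay in disjoint half-planes and cannot collide. If the algorithm instead selects $R'_{l1}$ and $R'_{l2}$, the same symmetry argument applies to these substitutes. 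Combining this with the per-path free-path guarantee of the first paragraph completes the argument. \qed
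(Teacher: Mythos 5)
Your core argument is the same as the paper's, which is essentially a one-liner: by construction the algorithm only ever moves a robot ($R_l$ or its substitute $R'_l$) along a free path to $q$, and the free-path corridor excludes all other robots, so no collision occurs; the paper then dismisses case 2 with ``similar arguments.'' Where you genuinely go beyond the paper is in your third step: you notice that in case 2 two robots may move in the same round, that the free-path check (made against a snapshot of \emph{stationary} robots) says nothing about the two movers colliding with \emph{each other}, and you try to close this with a mirror-symmetry argument across $L$. That is a real gap in the paper's proof and you are right to attack it. However, your resolution is not fully watertight: the claim that the two trajectories stay in disjoint half-planes needs the fact that $R_{fi}$ lies on the opposite side of $L$ from $R_{li}$ (true when $R_{fi}$ is off $L$, since the mirror image of any candidate is always at least as far from $R_{li}$), and it degenerates when the common farthest robot sits on $L$ itself, in which case $R_{f1}=R_{f2}$, hence $q_1=q_2$, and the two symmetric movers head for the \emph{same} destination --- a configuration your ``at worst meet $L$ only at their endpoints'' phrasing does not actually rule out as a collision. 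So: same skeleton as the paper, a worthwhile extra observation about simultaneous movers, but the symmetric subcase you single out as needing the most care is exactly the one your argument does not yet fully settle.
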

\begin{proof} 
First consider case 1. 
According to the algorithm $R_l$ moves only when there is a free path to $q$. Otherwise, the robot $R'_l$ having free path to $q$ and nearest to $q$ moves to $q$. Thus there is no chance of collision as the robot moves along free path.

Case 2 can be proved using similar arguments. 
\qed
\end{proof}

 \begin{lemma}
If initially $r_c < r$, SECExpansion make $r_c >= r$ in finite time.
\end{lemma}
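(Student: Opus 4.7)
The plan is to combine three ingredients: fairness of the semi-synchronous scheduler, rigidity of the motion, and the quantitative bounds on one-step progress that follow from the previous lemmas.

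First I would argue that every invocation of \emph{SECExpansion} elects a non-empty leader set (one leader in Case~1, two in Case~2), and that under the semi-synchronous model with fair scheduling at least one such leader completes a look-compute-move cycle within finitely many scheduler ticks. Since the motion is rigid, a leader that is activated reaches its computed destination in that single cycle rather than being preempted part-way through. Therefore the analysis reduces to showing that a \textbf{single} completed leader move drives the SEC radius to at least~$r$, which combined with fairness gives termination in finite time.

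Next I would do a case split on the destination computed in Step~3. In the sub-case where a robot $R_p$ already sits diagonally opposite $R_l$, the leader moves distance $d=2(r-r_c)$ outward along the line $R_pR_l$; by the first lemma $R_l$ and $R_p$ stay on the current SEC throughout the motion, so after the move they are at distance $2r_c+d=2r$ and form a diameter of the new SEC, giving $r_n=r$. In the sub-case where $p$ is empty, the chosen mover (either $R_l$ or the fallback $R'_l$ with a free path) travels to the point $q$ with $|R_fq|=2r$; by Lemma~\ref{onsec} the mover and $R_f$ lie on the new SEC at distance $2r$, so again $r_n\ge r$. The analogous statements in Case~2 follow by the same reasoning applied to the pairs $(R_{pi},R_{li})$ or $(R_{fi},q_i)$.

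The remaining sub-case is Case~2 with both leaders $R_{l1},R_{l2}$ activated simultaneously, where the center slides along the bisector of $R_{p1}c$ and $R_{p2}c$ rather than along a single diameter. I would handle it by observing that after the move each pair $(R_{pi},R_{li}')$ has distance $2r$ (each $R_{li}$ moved outward by exactly $d=2(r-r_c)$ from the far side of the original SEC), and any circle enclosing two chords of length $2r$ must itself have radius at least $r$. This bisector sub-case is the main technical obstacle, because it is the only one where the new center is not forced to move along a fixed diameter line, so one has to argue about the new SEC via the pair-distance lower bound rather than by direct substitution.

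Finally I would assemble the pieces: by the previous paragraph, every completed leader cycle produces $r_n\ge r$, and by the fairness argument such a completion occurs after finitely many scheduler ticks. Hence from any initial configuration with $r_c<r$, the radius attains $r_c\ge r$ in finite time, whereupon \emph{SECExpansion} is no longer invoked and control passes to \emph{FormUCircle}. A secondary point worth mentioning briefly is that a partial activation in Case~2 (only one of the two leaders moving) reduces the configuration to Case~1 in the next cycle rather than creating an infinite oscillation, which guarantees that the case split itself is well-behaved across cycles.
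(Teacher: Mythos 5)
Your proof is correct and follows the same skeleton as the paper's: fairness of the semi-synchronous scheduler plus rigid motion guarantees a leader completes its move, and a single completed leader move already yields a SEC of radius at least $r$. The difference is that the paper disposes of the quantitative step by citing its Lemmas~\ref{gerc0} and~\ref{gerc}, which as literally stated only give $r_n \ge r_c$ (with the intended bound $r_n \ge r$ buried, somewhat garbled, in their proofs), whereas you re-derive the bound directly: $2r_c + 2(r-r_c) = 2r$ when a diametrically opposite robot exists, $|R_f q| = 2r$ when it does not, and the two-chord argument for the simultaneous-activation sub-case of Case~2 (which the paper's lemmas do not explicitly treat). Your version is therefore self-contained and closes the small gap between what the cited lemmas claim and what the final lemma needs; the paper's version is shorter but leans on statements that do not quite say what is required. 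One minor remark: your worry about oscillation under partial activation in Case~2 is moot, since in the SSYNC model a single activated leader already completes its move within one round and pushes the radius to at least $r$, after which \emph{SECExpansion} is never invoked again.
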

\begin{proof}
The leader robots move to enlarge the SEC. Since the robots are semi-synchronous the leaders do not change. The robots follow rigid motion, hence, the leaders successfully reach their destinations. following lemmas \ref{gerc0} and \ref{gerc} the cases in case 1 and 2, the radius of the new SEC is $ r_c\ge r$ .
\qed 
\end{proof}

\subsection{FormUCircle}
After achieving a big SEC, which can accommodate all the robots on it and having a minimum inter robot distance $a$, the robots execute FormUCircle subroutine.

\paragraph{FormUCircle}
\begin{itemize}
	\item The robots compute the target points on the SEC. These are computed as equidistant points, starting from the intersection point of $L$ and SEC as reference, as below:
	
        \paragraph{ComputeTargetPoint:} 
        Thus, for uniform circle to form, the distance between any 2 adjacent robot should be 2pi/N.\\
        Let the north-most ntersection point of $L$ and the SEC be $o$. 
Two cases can happen here:
\begin{itemize}
 \item \textbf{\normalsize{(i) SEC contains a robot at 'o':}}
 In this case, 'o', is the first target point. Next, on both sides of this first point, at a distance $\frac{2\pi}{N}$ apart on the SEC, the next two target points are determined. Similarly, the remaining (N-3) target points are also placed. This process continues until all the N distinct target points are determined. 
 

 \item \textbf{\normalsize{(ii) SEC does not contain a robot at 'o':}}
In this case, 'o' is not made a target point on the SEC. Instead, by taking the point 'o' on the SEC as the reference point, on both side of this reference point, at distance $\frac{\frac{2 \pi}{N}}{2}$  i.e.$\frac{\pi}{N}$ the first two target points are determined. Then from these two target points, the remaining (N-2) target points are decided such that on either side of every target point there is a target point at distance $\frac{2\pi}{N}$ apart. 


\end{itemize}

	\item If a target point $T'$ is partially occupied by a single robot $R'$, then $R'$ moves to $T'$.
	
	\item The target points are shorted with respect to the Y value. The target points are filled starting from the target points with maximum Y value. Let $T$ be a vacant target point having maximum Y value. 
	

	\item  The robot nearest to a target point moves to that target point. The robots on the SEC move to their corresponding target points by sliding on the SEC. Other  inside the SEC, move along straight lines towards the target points.
	\begin{itemize}
	\item The robot $R_T$ nearest to $T$, moves to $T$. If there are multiple robots nearest to $T$, the robot having maximum Y value among them moves to $T$. 
	\item If robot $R_T$ is the nearest robot for two vacant symmetric target points, then $R_T$ moves to the vacant target position nearest to it. If there is a tie between target points, $R_T$ moves to any of them.           
	\end{itemize}
		
    \item If there exists any robot  $R_o$ in the path of a robot, $R$, towards its target point $T$, then $R$, instead of moving in straight line, slides over $R_o$ to reach $T$.
    
    \begin{figure}[H]
   \centering
   \includegraphics[scale=0.6]{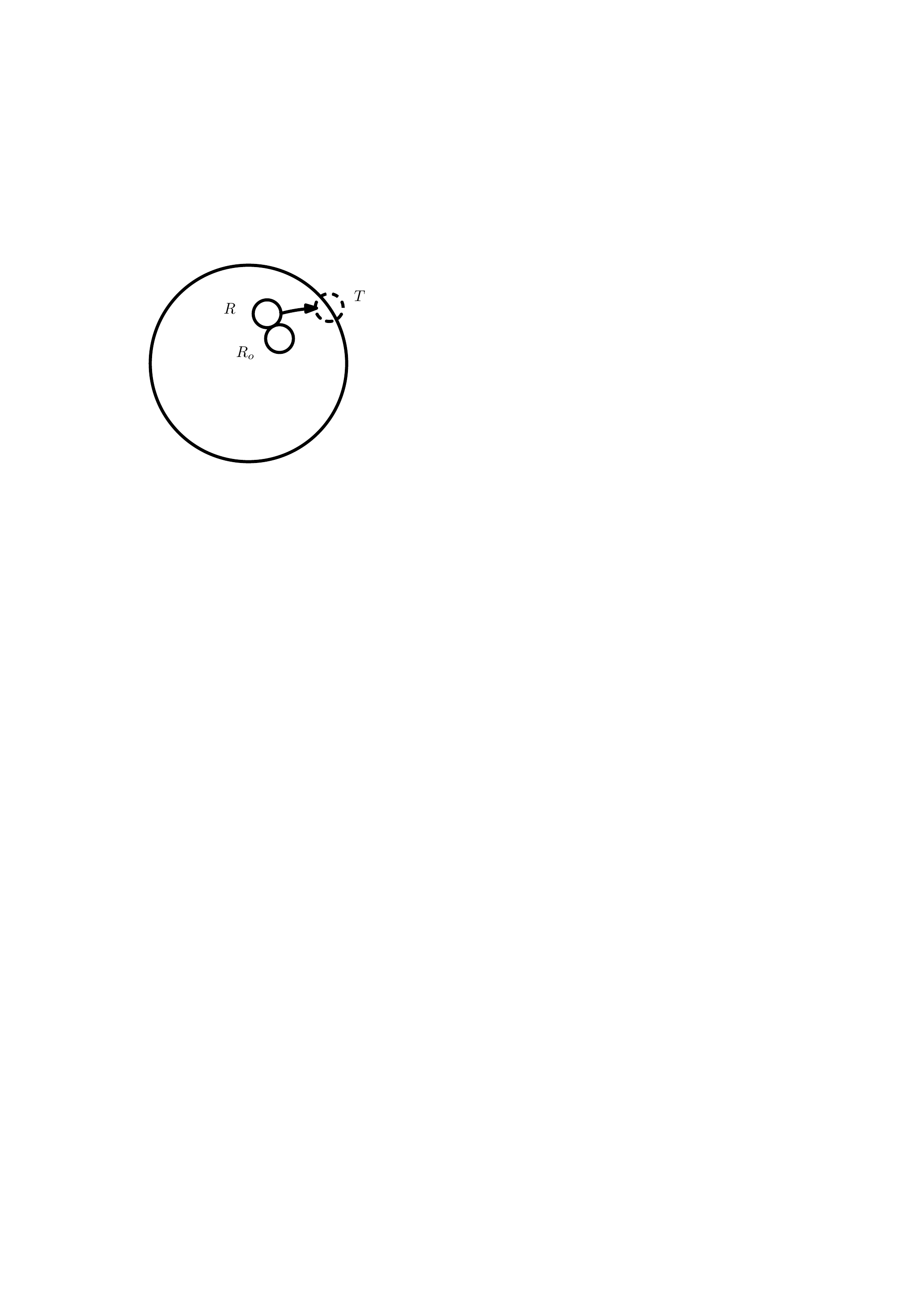}
   \caption{When there exists any robot  $R_o$ in the path of a robot, $R$}
   \label{8}
   \end{figure}



Note: If a vacant target point, $T'$ has its nearest robot, $R'$ on the SEC boundary such that : (1) $R'$ has the two adjacent target points $T1$ and $T2$ on the SEC already occupied by robots $R1$ and $R2$ respectively and (2) the destination target point, $T'$ has this intermediate target point robot, $R2$ as an obstruction in the path between $R'$ and $T'$; then robot $R2$ moves radially along the SEC to destination target point $T'$,  thus making target point $T2$ vacant. Then $R'$ also moves radially along the SEC to now vacant target point $T2$.

    \begin{figure}[H]
   \centering
   \includegraphics[scale=0.6]{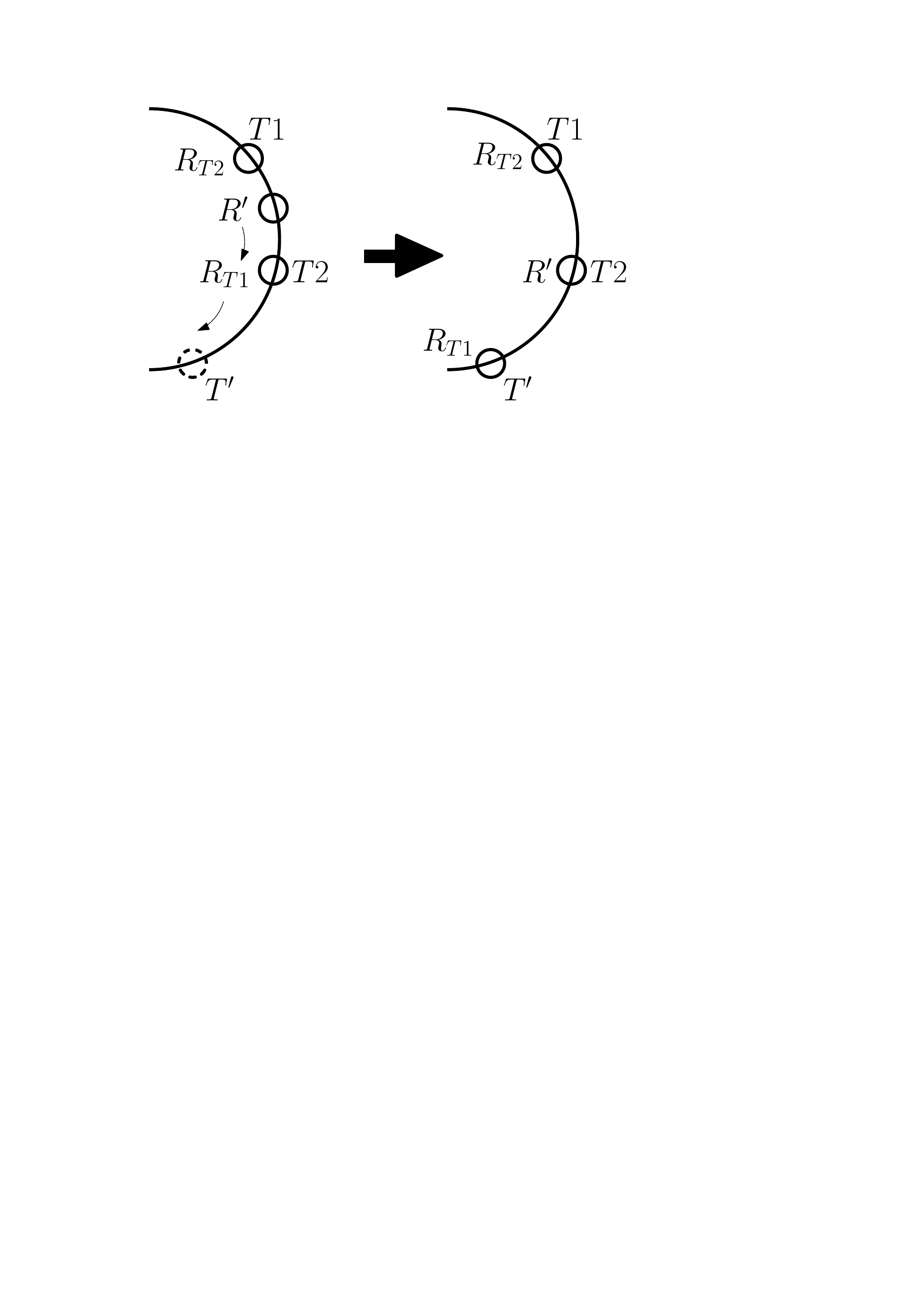}
   \caption{When a vacant target point, $T'$ has its nearest robot, $R'$ on the SEC boundary such that : (1) $R'$ has the two adjacent target points $T1$ and $T2$ on the SEC already occupied by robots $R1$ and $R2$ respectively and (2) the destination target point, $T'$ has this intermediate target point robot, $R2$ as an obstruction in the path between $R'$ and $T'$.}
   \label{8}
   \end{figure}
\end{itemize}

Since the robot nearest to the target point moves, following lemma holds.

\begin{lemma}
When a robot $R_T$ is moving to $T$, no other robot comes in its path, i.e., the movement of $R_T$ is collision free.	
\end{lemma}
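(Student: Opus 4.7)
The plan is to exploit the nearest-robot selection rule together with semi-synchronous scheduling to argue that any two robots moving in the same cycle follow paths whose width-$2$ thickenings are disjoint, and that the obstacle-avoidance manoeuvre (sliding over $R_o$) stays inside a collision-free corridor. I would split the analysis into three geometric cases: (i) $R_T$ lies on the current SEC and slides along the SEC arc to $T$; (ii) $R_T$ lies strictly inside the SEC and travels along a straight segment to $T$; (iii) $R_T$'s straight path is blocked by some stationary $R_o$ and $R_T$ slides around $R_o$.

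First I would show that any two robots $R_T$, $R'_T$ moving in the same cycle have distinct target points: this follows because the algorithm fills vacant targets in order of decreasing $Y$ value, and each mover is selected as the unique nearest robot to its chosen target (with the maximum-$Y$ tie-breaker). Next I would argue by contradiction: if the thickened paths of $R_T$ and $R'_T$ meet, then at the meeting point one of the two movers would be strictly closer to the other's target than the robot currently selected to service that target, contradicting the selection rule. The inter-target chord separation $a\ge 3$ provides the slack needed to thicken the straight paths to width $2$ without overlap.

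For case (i) the sliding motion is one-dimensional, so collision would require two sliders to close on the same arc segment of the SEC, which again contradicts the nearest-robot selection. For case (iii) the detour around the stationary $R_o$ is contained in a unit neighbourhood of the straight segment $R_T T$, so I only need to show that no third robot lies within distance $2$ of that segment; if one did, it would itself be a blocker handled by the sliding rule rather than a collision witness.

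The main obstacle will be the ``Note'' sub-case, in which an intermediate-target robot $R_2$ first slides along the SEC to the vacant target $T'$, freeing $T_2$ for a later move of $R'$. Under semi-synchronous scheduling these two moves may be separated by several cycles, and I must verify (a) that $R_2$ is indeed the legitimate mover to $T'$ in its cycle via the maximum-$Y$ tie-breaker, and (b) that the two successive arc-slides do not interfere with any other simultaneously moving robot on the SEC. Once this case is handled, combining the three cases yields collision-freeness of $R_T$.
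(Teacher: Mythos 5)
Your central contradiction --- that any robot encountered on the way to $T$ would have to be at least as close to $T$ as the selected mover, violating the nearest-robot selection rule --- is exactly the engine of the paper's own proof, so the approach is sound and essentially the same at its core. The decomposition differs, though. The paper considers only \emph{stationary} obstacles: either an obstacle is nearer to the target (impossible by selection), or several robots are equidistant and a sandwiched middle robot is excluded by the maximum-$Y$ tie-break and resolved by sliding; it never analyses two robots moving simultaneously toward distinct targets, the width-$2$ thickening, or the ``Note'' sub-case. Your plan is therefore strictly broader, and your crossing-paths argument can in fact be closed cleanly: if the segments $R_T T$ and $R'_T T'$ meet at a point $X$, the triangle inequality gives $|R_T T'|+|R'_T T| \le |R_T X|+|X T'|+|R'_T X|+|X T| = |R_T T|+|R'_T T'|$, so at least one mover ties or beats the other for its target, contradicting the selection rule --- you should state this explicitly rather than leave it as ``I would argue.'' Two caveats: first, extending that contradiction from the center paths to their width-$2$ thickenings does not follow automatically from $a\ge 3$ and needs an actual computation; second, you correctly identify the ``Note'' sub-case as the hard part but leave it unresolved, and the paper offers you no help there since it silently omits it. So your route is the more honest and more complete one, but as written it is still a plan with two acknowledged gaps rather than a finished proof.
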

\begin{proof}
In this algorithm, the robot nearest to the vacant north-most target point, moves to this target point. 
For existence of any obstacle robot following two situations may arise:
\begin{itemize}
\item The obstacle robot is nearer to the target point, which is not possible.
\item The moving robots can be obstructed by both sides when three robots are at same distance from the target and the middle robot touches other two robots from both the sides. 
In this situation the north-most robot is selected from the movement. This robot will have a open side and it will slide over the other robot and moves to its destination.
\end{itemize}
Thus the robots reach their destinations without collision.
\qed
\end{proof}

\begin{lemma}
There will be no deadlock in formation of regular polygon.	
\end{lemma}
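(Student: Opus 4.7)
My plan is to show that FormUCircle never reaches a configuration in which some target point remains vacant but no robot is able (or uniquely designated) to move. I would frame this as two claims: (i) at every intermediate configuration the algorithm designates a well-defined, non-empty set of movers, and (ii) each scheduled cycle strictly decreases a potential function, namely the number of vacant target points (with a tie-breaking secondary measure for the sliding subcase). Deadlock can only arise if either the designation is ambiguous (so no robot commits) or if every candidate mover is blocked with no sliding remedy. I would rule out both.

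First, I would walk through the algorithm's decision points and argue that each one is deterministic under the model's symmetry hypotheses. Target points are generated by ComputeTargetPoint purely from the SEC and the fixed line $L$, so all robots agree on the target set. The next vacancy to fill is the one of maximum $Y$; if two targets share the maximum $Y$ they are the symmetric pair straddling $L$, and the algorithm handles them together. The mover for a fixed target is the nearest robot, broken by maximum $Y$; in the symmetric-target case, the special rule that lets a robot equidistant from two symmetric vacant targets go to either one rules out the pathological case where two targets demand the same robot. Hence the mover assignment is a function of the snapshot, and under semi-synchronous scheduling all robots activated in the same cycle see the same assignment, so no two of them contend for the same target.

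Next, I would show progress. In the straight-line case, the designated mover's path from its position to $T$ is free (or made free by the obstruction-handling subcase), so rigid motion guarantees it reaches $T$ in one cycle, reducing the vacant-target count by one. For the obstruction subcase from Fig.~\ref{8}, I would use a two-step progress argument: the blocking neighbour $R_2$ slides along the SEC to the further vacancy $T'$, and then $R'$ slides into the now-vacant $T_2$. Because both slides occur along the SEC arc and each strictly shortens the arc distance between the occupied region and the vacant target, the cascade cannot loop around the SEC and must terminate in finitely many cycles, after which the vacancy count drops.

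The main obstacle I anticipate is the interaction between semi-synchronous activation and the symmetric case: one needs to check that if the scheduler activates only one of a mirror pair of designated movers, the remaining configuration is still consistent (in particular, the surviving mover in the next cycle is still uniquely designated by the same rule, possibly now in the non-symmetric branch). I would handle this by observing that after one of the pair moves, the Y-axis line $L$ is preserved (since movement was along the SEC or radially symmetric about $L$), and the remaining vacant-target set still has a well-defined max-$Y$ element, so the invariant re-establishes. Together with the previous steps, this yields termination without deadlock.
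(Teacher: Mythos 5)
Your proposal is correct in spirit and reaches the same conclusion, but it is organized around a different central device than the paper's proof, so a comparison is worth making. The paper's argument is a short counting argument: there are $n$ target points and $n$ robots, and because the polygon side is $a>3$ units no single fat robot can partially occupy two target points; hence whenever some target is vacant there is necessarily a robot available to fill it, and the explicit north-to-south ordering of which vacancy is served next (plus the all-targets-partially-occupied case, where every robot simply snaps to the target it already overlaps) rules out deadlock. You do not use this counting observation at all; instead you take as your key lemma the determinism of the mover assignment (targets and movers are functions of the snapshot, agreed upon by all activated robots) and then drive termination with a potential function, the number of vacant targets, refined by a secondary arc-length measure for the sliding cascade of Fig.~7. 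Your route buys more than the paper's: it makes explicit why semi-synchronous activation of only one robot of a mirror pair does not break the invariant, and it at least names the place where the vacancy count fails to decrease strictly (when a robot already sitting on a target, such as $R_2$, is relocated to another target), which the paper papers over with the phrase ``an ordering of the robots' movement is maintained.'' What you lose by dropping the counting argument is the paper's direct guarantee that a vacant target always has a genuinely free robot to claim it; in your framework this surfaces as the burden of proving that the cascade ``cannot loop around the SEC,'' which you assert but only sketch via the secondary measure. Neither proof fully discharges that obligation, so your version is no weaker than the paper's, but if you keep the potential-function framing you should state the lexicographic potential precisely and verify it decreases in every cycle in which a designated mover is activated.
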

\begin{proof} 
The vacant target points, starting from the north-most side, get filled by the robots. If no target point is vacant; it means, $n$ target points are partially or fully occupied by $n$ robots; Then all the robots will move to those target points occupied by them partially. Otherwise, since the number of target points is equal to the number of robots and no robot can partially block two target points, as being the side of the polygon is $>3$ units; there exists at least one vacant target point to be filled by its nearest robot.
Also, since, an ordering of the robots' movement is maintained in this algorithm, there is no deadlock and progress is guaranteed. 
\qed
\end{proof}

Through our algorithm, all the robots get their own target points and the collision free paths to reach them; thus forms an uniform circle.

\section{Conclusion}
Finally we can conclude the results in the following theorem.
\begin{theorem}
$n$ robots under semi-synchronous scheduling having rigid motion and agreement in one axis can form a uniform circle robots in finite time without collision.	
\end{theorem}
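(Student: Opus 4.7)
The plan is to assemble the theorem by composing the correctness guarantees already established for the two subroutines SECExpansion and FormUCircle. I would partition any run of the algorithm into two consecutive phases: phase (i), in which $r_c<r$ and every awake robot invokes SECExpansion, and phase (ii), in which $r_c\ge r$ and every awake robot invokes FormUCircle. Because each robot computes $r$ deterministically from the inputs $n$ and $a$ via equation (\ref{rad}), and computes $r_c$ from a Look that sees the whole configuration, and because the SEC and the shared $Y$-axis are geometric invariants independent of chirality, all robots awake in the same cycle agree on which phase they are in; hence the phase decomposition is globally well defined.

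For phase (i), I would quote the lemma at the end of the SECExpansion section asserting that SECExpansion drives $r_c$ to $\ge r$ in finitely many cycles, together with Lemmas~\ref{gerc0} and~\ref{gerc} (size of the new SEC after one leader move) and the two collision-freeness lemmas for leader motion. The points I would reinforce are: only the elected leader(s) move in any cycle, so no other robot is on a collision course; semi-synchrony guarantees that a leader's identity remains fixed while it is rigidly sliding to its destination; and the leader-selection rule, being based solely on the $Y$-axis and the SEC, is invariant across the robots' local frames. Consequently phase (i) terminates, collision-free, in a configuration with $r_c\ge r$.

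For phase (ii), I would use the determinism of ComputeTargetPoint—the target set depends only on the SEC, on whether a robot occupies the northmost intersection $o$, and on the shared $Y$-direction—to conclude that every robot computes the same ordered list of $n$ target points. The no-deadlock lemma then guarantees that in every cycle either all $n$ target points are already partially occupied (so one synchronous move finishes the construction) or some vacant target admits a unique nearest robot, selected by the decreasing-$Y$ rule on target points and the maximum-$Y$ tiebreak on candidate movers. The FormUCircle collision-freeness lemma, augmented by the two sliding rules (one for an obstructing robot on the straight-line path, one for an adjacent-SEC-vertex obstruction), shows that this move completes without collision. Since each such move strictly increases the number of fully occupied target points and this count is bounded by $n$, phase (ii) also terminates in finitely many cycles.

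The part I expect to be the main obstacle is the concurrency bookkeeping under semi-synchrony and obliviousness: several robots may wake up in the same cycle, recompute the vacant-target set from the same snapshot, and independently choose destinations, and I must rule out two of them selecting the same target or initiating moves that would cause a subsequent collision. The resolution I would use is exactly the two canonical orderings wired into the algorithm—maximum-$Y$ on target points and maximum-$Y$ on tied nearest robots—both well defined from the shared $Y$-axis alone, which force a unique (robot, target) assignment per cycle without any need for chirality or identity. With that consistency in hand, the conjunction of phase (i) termination, phase (ii) termination, the no-deadlock lemma, and the collision-freeness lemmas yields the theorem.
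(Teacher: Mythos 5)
Your proposal matches the paper's intent exactly: the paper states this theorem with no explicit proof of its own, presenting it as the direct conjunction of the SECExpansion lemmas (finite-time growth of the SEC to radius $\ge r$, collision-free leader motion) and the FormUCircle lemmas (collision-free movement to target points, no deadlock), which is precisely the two-phase composition you carry out. Your extra bookkeeping on frame-independence of the leader/target selection and on concurrent wake-ups under semi-synchrony supplies details the paper leaves implicit rather than taking a genuinely different route.
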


The immediate extension of this problem is to make the model weaker by considering asynchronous robots or non rigid movement or removing the agreement on axis. 

%
%



\begin{thebibliography}{14}
\bibitem{berg}
M. de Berg, O. Cheong, M. Van Kreveld, M. Overmars.,
Computational Geometry, Algorithms and Applications, Springer, SBN 978-3-540-77974-2.




\bibitem{Ref3}
X. A. Debest.,
Remark about self-stabilizing systems. Commun. ACM, 38(2):115{117, 1995.}

\bibitem{Ref4}
K. Sugihara and I. Suzuki.,
Distributed algorithms for formation of geometric patterns with many mobile robots. Journal on Robotics Systems, 3(13):127{139, 1996.}

\bibitem{Ref5}
Sugihara, K., Suzuki, I.,
Distributed Motion Coordination of Multiple Mobile Robots. In: Proc. IEEE International Symposium on Intelligent Control, pp. 138–143 (1990)

\bibitem{Ref6}
Suzuki, I., Yamashita, M.,
Distributed Anonymous Mobile Robots: Formation of
Geometric Patterns. SIAM Journal of Computing 28(4), 1347–1363 (1999)

\bibitem{Ref7}
D´efago, X., Konagaya, A.,
Circle Formation for Oblivious Anonymous Mobile
Robots with no Common Sense of Orientation. In: Proc. 2nd International Annual
Workshop on Principles of Mobile Computing, pp. 97–104 (2002)
\bibitem{Ref8}
Katreniak, B.,
Biangular Circle Formation by Asynchronous Mobile Robots. In: Pelc, A., Raynal, M. (eds.) SIROCCO 2005. LNCS, vol. 3499, pp. 185–199.

\bibitem{Ref9}
Prencipe, G.,
InstantaneousActions vs. FullAsynchronicity: Controlling and Coordinating a Set of Autonomous Mobile Robots. In: Restivo, A., Ronchi Della Rocca, S., Roversi, L. (eds.) ICTCS 2001. LNCS, vol. 2202, pp. 154–171. Springer, Heidelberg (2001)

\bibitem{Ref10}
D´efago, X., Souissi, S.,
Non Uniform Circle Formation Algorithm for Oblivious Mobile Robots with Convergence Towards Uniformity. Theoretical Computer Science 396(1-3), 97–112

\bibitem{Ref11}
 Flocchini P., Prencipe G., Santoro N., and Viglietta G.,
Distributed Computing by Mobile Robots: Uniform Circle Formation. Distributed Computing, in press. 

\bibitem{Ref12}
Mamino M.and Viglietta G., Square formation by asynchronous oblivious robots, in
Proc. of the 28th Canadian Conference
on Computational Geometry (CCCG)
, 2016, pp. 1–6.


\bibitem{Ref14}
J. Czyzowicz, L. Gasieniec and A. Pelc.,
Gathering few fat mobile robots in the plane. Theoretical Computer Science, vol. 410, no. 6–7, pages 481 499, 2009. Principles of Distributed Systems.

\bibitem{Ref16}
Datta S., Dutta A., Gan Chaudhuri S., Mukhopadhyaya K.,
Circle Formation by Asynchronous Transparent Fat Robots. In
Proceedings of the 9th international conference on Distributed Computing and Internet Technology (ICDCIT), 195-207, 2013.

\bibitem{Ref15}
Dutta, A., Gan Chaudhuri, S., Datta, S., Mukhopadhyaya, K.: Circle Formation
by Asynchronous Fat Robots with Limited Visibility. In
Proceedings of the 8th international conference on Distributed Computing and Internet Technology (ICDCIT), 83–93, 2012.


\end{thebibliography}
\end{document}